\newcommand\bb{\mathbb}
\renewcommand*\env@matrix[1][*\c@MaxMatrixCols c]{%
  \hskip -\arraycolsep
  \let\@ifnextchar\new@ifnextchar
  \array{#1}}
\theoremstyle{plain}
\newtheorem{thm}{Theorem}[section]
\newtheorem{lem}[thm]{Lemma}
\newtheorem{prop}[thm]{Proposition}
\theoremstyle{definition}
\newtheorem{defn}[thm]{Definition}
\newtheorem{exmp}[thm]{Example}
\newtheorem{rem}[thm]{Remark}
\theoremstyle{plain}
\DeclareMathOperator{\sheafhom}{\mathscr{H}\text{\kern -3pt {\calligra\large om}}\,}
\DeclareMathOperator{\Rbb}{\mathbb{R}}
\DeclareMathOperator{\Zbb}{\mathbb{Z}}
\DeclareMathOperator{\ECT}{ECT}
\DeclareMathOperator{\QECT}{QECT}
\DeclareMathOperator{\CF}{CF}
\renewenvironment{proof}{%
    \vspace{-\parskip}\begin{oldproof}%
    }{%
    \end{oldproof}\vspace{-\parskip}%
}
\let\phi\varphi
\title{On the Injectivity of Euler Integral Transforms with Hyperplanes and Quadric Hypersurfaces}
    \author[1]{Mattie Ji}
    \address{Brown University, Department of Mathematics, Box 1917, 151 Thayer Street, Providence, RI 02912, USA}
    \email{mattie\_ji@brown.edu}
\begin{document}
\begin{abstract}
    The Euler characteristic transform (ECT) is an integral transform used widely in topological data analysis. Previous efforts by Curry et al. and Ghrist et al. have independently shown that the ECT is injective on all compact definable sets. In this work, we first study the injectivity of the ECT on definable sets that are not necessarily compact and prove a complete classification of constructible functions that the Euler characteristic transform is not injective on. We then introduce the quadric Euler characteristic transform (QECT) as a natural generalization of the ECT by detecting definable shapes with quadric hypersurfaces rather than hyperplanes. We also discuss some criteria for the injectivity of $\QECT$.
\end{abstract}

\subjclass[2020]{Primary: 46M20, 52A22}

\maketitle

\section{Introduction}\label{sec:Introduction}

The Euler characteristic transform (ECT) is an integral transform in topological data analysis (TDA) introduced in \cite{turner2014frechet}. Since then, the ECT itself and its variants have been widely used in the applied science \citep{phenotype, crawford2020predicting, wang2021statistical, 10208939, hacquard2023euler, roell2023differentiable}. On a high level, the ECT takes in a shape $S$ in $\bb R^n$, ``scans" $S$ through each direction $v \in \bb S^{n-1}$, and keeps track of the Euler characteristics of the sublevel sets of $S$. Concretely, the Euler characteristic transform of $S$ may be formalized into a function $\ECT(S): \bb S^{n-1} \times \bb R \to \bb Z$ defined by
\[(\nu, t) \mapsto \ECT(S)(\nu, t) = \chi(\{x \in S\ |\ x \cdot \nu \leq t\}),\]
where $\chi(\bullet)$ denotes the combinatorial Euler characteristic (see Definition~\ref{def: comb_euler_chars}). Note that $x \cdot \nu = t$ defines the equation of a hyperplane in $\bb R^n$.

In the conclusion of \cite{curry2022many}, the authors posed the question of how the Euler characteristic transform (ECT) would behave on shapes cut out by quadratic equations rather than linear equations. Inspired by this question, we consider a ``converse" of this question in this work - \textit{what if we replace hyperplanes in the $\ECT$ with quadric hypersurfaces?}

The equation of a quadric hypersurface may be written as $x^T A x + \nu \cdot x = t$, where $A$ is a symmetric $n \times n$ real matrix and $\nu$ is a vector in $\bb R^n$. Based on this notion, we can define the quadric Euler characteristic transform (QECT) of the shape $S$ as a function given by $(A, \nu, t) \mapsto \QECT(S)(A, \nu, t) \coloneqq \chi(\{x \in S\ |\ x^T A x + x \cdot \nu \leq t\}$. This definition will be made more precise in Section~\ref{sec:Quadric Hypersurfaces}. By extending the class of hyperplanes to quadric hypersurfaces, the hope is that the $\QECT$ would add an extra variable that takes into account of curvatures.

In this work, a central question we are interested in is the injectivity of the $\ECT$ and the $\QECT$. Previous efforts by \cite{curry2022many} and \cite{ghrist2018persistent} have independently shown that the $\ECT$ is injective on a ``reasonable" class of compact shapes. Furthermore, the work in \cite{ghrist2018persistent} showed that this injectivity result extends to finite sums of indicator functions on a collection of shapes (known as \textit{constructible functions}) that are compactly supported.

We will first investigate the injectivity of the ECT on constructible functions that are not compactly supported. As we will see in Example~\ref{ex: non-injective definable sets} and Theorem~\ref{thm: hyperplane injective} in Section~\ref{sec:Hyperplanes}, there are many pairs of constructible functions that the $\ECT$ is not injective on. We then extend the ECT to the QECT and discuss its injectivity in Theorem~\ref{thm: quadric zero vector} and Theorem~\ref{thm: quadric interpolation}. Specifically, we will prove the following main results in this paper.
\begin{enumerate}
    \item We completely classify all the pairs of constructible functions that the Euler characteristic transform is not injective on in Theorem~\ref{thm:injectivity_hyperplane}.
    \item Suppose $v = 0$ is fixed, we show that the function $S \mapsto \{(A, t) \mapsto \QECT(S)(A, 0, t)\}$ is injective up to sign in Theorem~\ref{thm: quadric zero vector}.
    \item Suppose the classes of ``reasonable" shapes (see Definition~\ref{def: o-minimal}) we are considering are all contained in $B_R(0) \coloneqq \{x \in \bb R^n\ |\ |x| \leq R\}$ for some $R \geq 0$. For a fixed $A$ such that $||A||_{op} < \frac{1}{1 + 2 R^2}$, we show that the function $S \mapsto \{(v, t) \mapsto \QECT(S)(A, v, t)\}$ is injective in Theorem~\ref{thm: quadric interpolation}. In particular, this serves as an interpolation between the injectivity of the $\ECT$ and Theorem~\ref{thm: quadric zero vector} (see Remark~\ref{rem: interpolation}).
\end{enumerate}
These statements will be made more precise in their respective theorems.

\subsection{Outline} The paper is organized as follows. In Section~\ref{sec:Background}, we introduce the relevant backgrounds in o-minimal structures, Euler calculus, and the ECT. In Section~\ref{sec:Hyperplanes}, we discuss the injectivity of the ECT to all constructible functions, leading to a complete characterization of injectivity in Theorem~\ref{thm: hyperplane injective}. In Section~\ref{sec:Quadric Hypersurfaces}, we extend the ECT to the QECT by considering quadric surfaces rather than hyperplanes in the sublevel sets of the integral transform and discuss several results on the injectivity of the QECT, leading to Theorem~\ref{thm: quadric zero vector} and Theorem~\ref{thm: quadric interpolation} in the end.

\subsection*{Acknowledgements} M.J. would like to thank Professor Kun Meng and Professor Richard Schwartz for their helpful comments and discussions. M.J. would also like to thank Cheng Chen for helpful conversations on functional analysis. M.J. would also like to thank Nir Elber and Riley Guyett for proofreading the paper and providing feedback and suggestions.

\section{Background}\label{sec:Background}

In this section, we will cover the necessary backgrounds in o-minimal structures, Euler calculus, and the ECT. We refer the reader to \cite{van1998tame} for a comprehensive introduction to o-minimal structures, \cite{curry2012euler} and \cite{Gusein-Zade_2010} for more details in Euler calculus, and \cite{munch2023invitation} for a general review of the Euler characteristic transform.

\subsection{O-minimal structures}

O-minimal structures are widely used as the mathematical representation of a shape \citep{ghrist2014elementary, curry2022many, kirveslahti2023representing, meng2023Inference} in applied topology and topological data analysis. Often in integral geometry, we want to consider shapes that have some level of ``tameness" to avoid pathological examples, and o-minimal structures offer one way to capture the idea of ``tameness".

\begin{defn}\label{def: o-minimal}
    Let $\mathcal{O}_n$ be a collection of subsets of $\Rbb^n$ and $\mathcal{O} = \{\mathcal{O}_n\}_{n \geq 1}$, we say that $\mathcal{O}$ is an \textit{o-minimal structure} if it satisfies the following seven axioms.
\begin{enumerate}
        \item $\mathcal{O}_n$ is a Boolean algebra.
        \item If $A \in \mathcal{O}_n$, then $A \times \Rbb \in \mathcal{O}_{n+1}$ and $\Rbb \times A \in \mathcal{O}_{n+1}$.
        \item $\{(x_1, ..., x_n) \in \Rbb^n\ |\ x_i = x_j\} \in \mathcal{O}_n$ for $1 \leq i < j \leq n$.
        \item $\mathcal{O}$ is closed under axis-aligned projections.
        \item $\{r\} \in \mathcal{O}_1$ for all $r \in \Rbb$ and $\{(x, y) \in \Rbb^2\ |\ x < y\} \in \mathcal{O}_2$.
        \item $\mathcal{O}_1$ is exactly the finite unions of points and open intervals.
        \item $\mathcal{O}$ contains all real algebraic sets.
\end{enumerate}
An element of $\mathcal{O}_n$ is called a \textit{definable} set.
\end{defn}

In this paper, we will fix an arbitrary $o$-minimal structure $\mathcal{O}_n$. In particular, Definition~\ref{def: o-minimal} implies that any o-minimal structure has to contain all semialgebraic sets (see Remark 2.2 of \cite{curry2022many}). We also want a notion of ``definability" for functions between definable sets.

\begin{defn}\label{def: definable functions}
    Let $f: X \to Y$ be a function between definable spaces.
    \begin{enumerate}
        \item $f$ is called \textit{definable} if its graph is a definable set.
        \item If $f$ is continuous definable with continuous definable inverse, then $f$ is called a \textit{definable homeomorphism}, and $X$ and $Y$ are said to be \textit{definably homeomorphic}.
        \item If $f$ is an integer valued function, then $f: X \to \bb Z$ is called a \textit{constructible function}. Let $\operatorname{CF}(X)$ denote the space of constructible functions on $X$. Note that Definition~\ref{def: o-minimal}(4) implies that the image of $f$ is a discrete definable subset of $\mathcal{O}_1$ and is thus finite.
    \end{enumerate}
\end{defn}

When we define the quadric Euler characteristic transform later, we want to consider a suitable norm on the space of symmetric $n \times n$ matrices. There are many choices of norms for matrices that are popular in machine learning, such as the Schatten norm, cut norms, and $L_{p, q}$ norms, or the operator norm (see \cite{Fan_Li_Zhang_Zou_2020}), so we need to consider norms that are compatible with our $o$-minimal structure.

\begin{defn}\label{def: definable norm}
Let $(V, ||\bullet||)$ be a finite-dimensional normed real vector space.
\begin{enumerate}
    \item $||\bullet||$ is called a definable norm if the norm function $||\bullet||: V \to \bb R_{\geq 0}$ is definable.
    \item $S^V \coloneqq \{x \in V\ |\ ||x|| = 1\}$ is called the unit sphere with respect to $||\bullet||$. Note that $S^V$ is a compact definable set when $||\bullet||$ is a definable norm.
    \item In particular, we will use $|\bullet|$ to denote the usual $\ell_2$ norm on $\bb R^n$, and $\mathbb{S}^{n-1}$ to denote the usual unit sphere in $\bb R^n$ with respect to the $\ell_2$ norm.  
\end{enumerate}
\end{defn}

\begin{exmp}
Here are some examples of definable norms that will be relevant to our discussions in Section~\ref{sec:Quadric Hypersurfaces}.
    \begin{enumerate}
        \item Let $|\bullet|$ be the $\ell_2$ norm on $\bb R^n$, then the set
        \[\{(x_1, ..., x_n, y) \in \bb R^n \times \bb R\ |\ x_1^2 + ... + x_n^2 = y^2 \text{ and } y \geq 0\}\]
        is a semialgebraic set and is hence definable. This is the graph of $|\bullet|: \bb R^n \to \bb R_{\geq 0}$.
        \item Let $V$ be the vector space of $n \times n$ symmetric real matrices and $||\bullet||_{\mathrm{op}}$ be the operator norm on $V$. The graph of $||\bullet||_{op}: V \to \bb R_{\geq 0}$ may be realized as an axis-aligned projection of the following definable set
        \[\{(A, \lambda_1, ..., \lambda_n) \in V \times \bb R^n\ |\ \det(A - \lambda_i I) = 0, e_i(\lambda_1, ..., \lambda_n) = \frac{a_i(A)}{a_n(A)} \text{ for all $i = 1, ..., n$} \text{ and } \lambda_1^2 \leq ... \leq \lambda_n^2\},\]
        where $e_i$ denotes the $i$-th elementary symmetric polynomial, $a_i(A)$ denotes the $i$-th coefficient of the characteristic polynomial $\det(A - xI)$. Note that $a_i$ is a polynomial function on the components of the matrix $A$. Thus, the operator norm is definable.
    \end{enumerate}
\end{exmp}

We also state the following technical lemma on o-minimal structures that will be used later in the paper.
\begin{lem}[Rephrased from Proposition 2.10 of Chapter 4 of \cite{van1998tame}]\label{lemma:: finite euler char types}
Let $S \subseteq \mathbb{R}^{m+n}$ be a definable set. For any $a\in\mathbb{R}^m$, define $S_a :=\{x\in\mathbb{R}^n\vert\, (a,x)\in S\}$. Then $\chi(S_a)$ takes only finitely many values as $a$ runs through $\bb R^m$, and for each integer $e$ the set $\{a \in \bb R^m: \chi(S_a) = e\}$ is definable.
\end{lem}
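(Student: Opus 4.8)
The plan is to deduce the statement from the o-minimal cell decomposition theorem together with the combinatorial definition of $\chi$ (an open cell of dimension $d$ contributes $(-1)^d$, extended additively over any cell decomposition), which is in essence the argument given in Chapter~4 of \cite{van1998tame}. Throughout, write $\pi\colon \bb R^{m+n}\to\bb R^m$ for the projection onto the first $m$ coordinates, so that $S_a=\pi^{-1}(a)\cap S$, regarded as a subset of $\bb R^n$.

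First I would choose a cell decomposition $\mathcal D$ of $\bb R^{m+n}$ that is compatible with $S$ (every cell lies entirely in $S$ or entirely in its complement) and compatible with $\pi$ (the images $\pi(C)$, $C\in\mathcal D$, form a cell decomposition $\mathcal E$ of $\bb R^m$, and $\pi$ maps each $C\in\mathcal D$ onto the cell $\pi(C)\in\mathcal E$); such a decomposition exists by the cell decomposition theorem \citep{van1998tame}. Now fix $D\in\mathcal E$ and a point $a\in D$. The cells $C\in\mathcal D$ with $\pi(C)=D$ partition $\pi^{-1}(D)$, and those of them contained in $S$ partition $\pi^{-1}(D)\cap S$; intersecting with the fiber over $a$, the sets $C_a:=\pi^{-1}(a)\cap C$ with $C\subseteq S$ and $\pi(C)=D$ partition $S_a$. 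The crucial point, immediate from the inductive description of cells, is that each such $C_a$ is an open cell in $\bb R^n$ whose dimension $d_C:=\dim C-\dim D$ is independent of the chosen point $a\in D$.

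By additivity of $\chi$ together with $\chi(\text{open $d$-cell})=(-1)^d$, for every $a\in D$ we obtain
\[
\chi(S_a)=\sum_{\substack{C\in\mathcal D,\ C\subseteq S\\ \pi(C)=D}}\chi(C_a)=\sum_{\substack{C\in\mathcal D,\ C\subseteq S\\ \pi(C)=D}}(-1)^{d_C}\ =:\ e_D,
\]
a value depending only on $D$. Hence $\chi(S_a)$ ranges over the finite set $\{\,e_D : D\in\mathcal E\,\}$ as $a$ runs over $\bb R^m$, giving the first assertion; and for each integer $e$ we have
\[
\{\,a\in\bb R^m : \chi(S_a)=e\,\}=\bigcup_{\substack{D\in\mathcal E\\ e_D=e}}D,
\]
a finite union of cells, hence definable by Definition~\ref{def: o-minimal}(1).

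The only substantial input is the cell decomposition simultaneously compatible with $S$ and with $\pi$ — this is where all of the o-minimality is used — and I would simply cite it; granting it, everything else is bookkeeping. The one routine verification behind the argument is that the fibers of a cell over a fixed base cell have constant dimension, which follows by induction on the construction of cells. As an alternative route, one could instead invoke the o-minimal trivialization theorem: it produces a finite definable partition of $\bb R^m$ over each piece of which all the fibers $S_a$ are definably homeomorphic, and since $\chi$ is a definable-homeomorphism invariant this again makes $\chi(S_a)$ constant on each piece, from which both conclusions follow as above.
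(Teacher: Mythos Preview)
Your argument is correct and is essentially the standard proof from \cite{van1998tame}. Note, however, that the paper does not give its own proof of this lemma: it is stated as a rephrasing of Proposition~2.10 of Chapter~4 of \cite{van1998tame} and used as a black box. So there is no paper proof to compare against; your cell-decomposition argument (and the alternative via definable trivialization) is exactly the content being cited.
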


\subsection{Euler Calculus and the Euler Characteristic Transform}

Let $S \subseteq \Rbb^n$ be a definable set, the cell decomposition theorem \citep[][Chapter 3, Theorem 2.11]{van1998tame} asserts that there is a disjoint partition of $S$ into open-cells $C_1, ..., C_N$ such that each $C_i$ is definably homeomorphic to $\Rbb^{a_i}$ for some $a_i$. 
\begin{defn}\label{def: comb_euler_chars}
        Choose $S$ as above, the \textit{Euler characteristic} of $S$ is $\chi(S) \coloneqq \sum_{i = 1}^N (-1)^{a_i}$. This quantity is independent of the cell partition and is preserved under definable homeomorphisms (see Chapter $4$ of \cite{van1998tame}).
\end{defn}

Euler calculus is an integral calculus based on the observation that the \textit{Euler characteristics} $\chi(\bullet)$ exhibits a finitely additive property similar to a signed measure:
\[\chi(A \cup B) = \chi(A) + \chi(B) - \chi(A \cap B).\]
The field seeks to develop a theory of integration for constructible functions, similar to how regular calculus developed a theory of integration for measurable functions.
\begin{defn}
    Let $X$ be a definable function and $f: X \to \bb Z$ be a constructible function. The \textit{Euler integral} of $f$ is
    \[\int_X f(x) d\chi(x) \coloneqq \sum_{n = -\infty}^{\infty} n \chi(\{x \in X\ |\ f(x) = n\}).\]
    Note that this quantity is well-defined by the discussions in Definition~\ref{def: definable functions}(3). The \textit{Euler characteristic transform} of $f$ is defined as
    \[\ECT(f): \mathbb{S}^{n-1} \times \bb R \to \bb Z, (\nu, t) \mapsto \ECT(f)(\nu, t) = \int_{X} f(x) \mathbbm{1}_{X^v_t}(x) d\chi(x),\]
    where $X^v_t$ denotes the set $\{x \in X\ |\ v \cdot x \leq t\}$. For a definable subset $S \subseteq X$, We use $\ECT(S)$ to indicate the Euler characteristic transform of the indicator function on $S$.
\end{defn}

Here is an example of computation with the Euler characteristic transform.
\begin{exmp}
    Take $B_1(0) = \{ {x} \in \Rbb^n\ |\ |{x}| \leq 1\}$ to be the closed unit ball with respect to the $\ell_2$ norm. For any $\nu \in \bb S^{n-1}$, we have that $\ECT(B_1(0)(\nu, t) = 1$ if $-1 \leq t$ and $\ECT(B_1(0)(\nu, t) = 0$ if $t < -1$.
\end{exmp}

Euler calculus also enjoys its version of Fubini's Theorem.

\begin{thm}[Fubini's Theorem for Euler integrals]\label{thm::Fubini_Theorem}
    Let $f: X \to Y$ be a definable function between definable sets and $h: X \to \bb Z$ be a constructible function, then
    \[\int_X h(x) d\chi(x) = \int_Y \left( \int_{f^{-1}(y)} h(x) d\chi(x) \right) d\chi(y).\]
\end{thm}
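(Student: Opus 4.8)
The plan is to reduce the identity to the case where $h$ is the indicator function of a definable set, and then to apply the definable trivialization theorem. First, observe that both sides are $\bb Z$-linear in $h$: the left-hand side obviously is, and for the right-hand side the inner integral $y \mapsto \int_{f^{-1}(y)} h(x)\,d\chi(x)$ is linear in $h$, and by Lemma~\ref{lemma:: finite euler char types} (applied to the definable set $\{(f(x),x)\}\subseteq Y\times X$, so that the fiber over $y$ is $f^{-1}(y)$) this inner integral is a constructible function of $y$; hence its Euler integral is well-defined and depends linearly on $h$. Since by Definition~\ref{def: definable functions}(3) every $h\in\operatorname{CF}(X)$ is a finite $\bb Z$-combination of indicators of definable subsets of $X$, it suffices to prove the statement for $h=\mathbbm{1}_S$. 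In that case the left-hand side is $\chi(S)$, while, using $S\cap f^{-1}(y)=(f|_S)^{-1}(y)$, the right-hand side is $\int_Y \chi\big((f|_S)^{-1}(y)\big)\,d\chi(y)$. Renaming $S$ as $X$ and $f|_S$ as $f$, the theorem becomes the assertion that
\[\chi(X) = \int_Y \chi(f^{-1}(y))\,d\chi(y)\]
for every definable map $f:X\to Y$.

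To prove this reduced statement I would invoke the o-minimal trivialization theorem (definable Hardt triviality; see \cite[Chapter~9]{van1998tame}): there is a finite partition of $Y$ into definable sets $B_1,\dots,B_N$ and, for each $\ell$, a definable set $F_\ell$ together with a definable homeomorphism $\phi_\ell : f^{-1}(B_\ell)\to B_\ell\times F_\ell$ commuting with the projections onto $B_\ell$. Consequently $f^{-1}(y)$ is definably homeomorphic to $F_\ell$ for every $y\in B_\ell$, so $\chi(f^{-1}(y))=\chi(F_\ell)$ is constant on each $B_\ell$; in particular the partition $\{B_\ell\}$ refines the level sets of the (constructible) integrand.

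Now both sides are computed by the same bookkeeping. By finite additivity of $\chi$ along the partition $Y=\bigsqcup_\ell B_\ell$ together with the constancy just established,
\[\int_Y \chi(f^{-1}(y))\,d\chi(y) = \sum_{\ell=1}^N \chi(F_\ell)\,\chi(B_\ell).\]
On the other hand, from the partition $X=\bigsqcup_\ell f^{-1}(B_\ell)$, finite additivity of $\chi$, its invariance under definable homeomorphism, and its multiplicativity on products (immediate from cell decomposition, since $\bb R^a\times\bb R^b\cong\bb R^{a+b}$),
\[\chi(X) = \sum_{\ell=1}^N \chi\big(f^{-1}(B_\ell)\big) = \sum_{\ell=1}^N \chi(B_\ell\times F_\ell) = \sum_{\ell=1}^N \chi(B_\ell)\,\chi(F_\ell),\]
and comparing the two displays completes the proof.

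The main obstacle is the trivialization step: one needs the full strength of definable Hardt triviality to replace $f$ over each piece $B_\ell$ by a genuine product projection, and one must keep track that the $B_\ell$ and $F_\ell$ are definable so that $\chi(B_\ell)$, $\chi(F_\ell)$, and $\chi(B_\ell\times F_\ell)$ all make sense and homeomorphism invariance applies. Everything else — the reduction to indicators, the well-definedness of the outer integral via Lemma~\ref{lemma:: finite euler char types}, and the final count — is a formal consequence of the additivity and homeomorphism invariance of the combinatorial Euler characteristic. (An alternative avoiding an explicit appeal to trivialization would be an induction on $\dim Y$ using cell decomposition adapted to $f$, but the trivialization argument is cleaner.)
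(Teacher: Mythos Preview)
Your proof is correct. The paper does not supply its own proof of this theorem but defers to Theorem~4.5 of \cite{curry2012euler} and Theorem~1 of \cite{Gusein-Zade_2010}; your argument---reducing by $\bb Z$-linearity to indicators and then invoking definable Hardt triviality together with the multiplicativity of $\chi$ on products---is precisely the standard proof found in those references, so there is nothing further to compare.
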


\noindent A proof of Theorem~\ref{thm::Fubini_Theorem} may be found in Theorem 4.5 of \cite{curry2012euler}. Note that while the authors assumed $h$ to be compactly supported, the condition is not strictly required in the proof (see Page 5 of \cite{curry2012euler}). Theorem 1 of \cite{Gusein-Zade_2010} presents an explicit proof of Theorem~\ref{thm::Fubini_Theorem} for the case of semialgebraic sets without the assuming $h$ to be compactly supported, and the case for a general $o$-minimal structure follows similarly.

For convenience, we will also briefly explain what a Radon transform is and how it relates to the Euler characteristic transform.
\begin{defn}\label{def: radon transform}
    Let $(X, Y)$ be a pair of definable sets and $K \in \operatorname{CF}(X \times Y)$ (known as a \textit{kernel function}), then the \textit{Radon transform} is a function $R_K: \operatorname{CF}(X) \to \operatorname{CF}(Y)$ defined by
    \[(R_K h)(y) = \int_X h(x) K(x, y) d\chi(x), h(x) \in \operatorname{CF}(X), \text{ for all } y \in Y.\]
    In particular, when $Y = \bb S^{n-1} \times \bb R$ and $K$ is the indicator function on $\{(x, \nu, t) \in X \times Y \ |\ v \cdot x \leq t\}$, then $R_K$ is the ECT.
\end{defn}

\section{Euler Characteristic Transform with Hyperplanes}\label{sec:Hyperplanes}

In \cite{ghrist2018persistent}, the authors proved the following result on the injectivity of the Euler characteristic transform based on the Schapira inversion formula in \cite{schapira1995tomography}.

\begin{thm}[Theorem 1 of \cite{ghrist2018persistent}, Modified]\label{thm: hyperplane injective}
Let $X = \bb R^n$, $Y = \bb S^{n-1} \times \bb R$, $K \in \operatorname{CF}(X \times Y)$ be the indicator function on $\{(x, \nu, t) \in X \times Y \ |\ v \cdot x \leq t\}$ and $K' \in \operatorname{CF}(Y \times X)$ be the indicator function on $\{(\nu, t, x) \in Y \times X\ |\ v \cdot x \geq t\}$, then for any $h \in \operatorname{CF}(X)$, the following formula holds
\begin{align}\label{eq: inversion formula for ECT} 
(R_{K'} \circ R_{K}) h = (\mu - \lambda) h + \lambda (\int_X h d\chi) \mathbbm{1}_{X},
\end{align}
where $\mu = \chi(\bb S^{n-1})$ and $\lambda = 1$. Moreover, when restricted to the class of compactly supported functions on $X$, $R_K = \ECT$ is injective.
\end{thm}

In the original proof by the authors of \cite{ghrist2018persistent}, this formula is only stated in the case where $h$ is compactly supported. However, the formula still holds when $h$ is not compactly supported. Please see the Appendix (Section~\ref{sec: appendix}) for a proof of Equation~\ref{eq: inversion formula for ECT} without assuming that $h$ is compactly supported. While the $\ECT$ is injective on compactly supported constructible functions, it is not injective on $\operatorname{CF}(\mathbb{R}^n)$. We illustrate this with the following counter-example.
 
\begin{exmp}\label{ex: non-injective definable sets}
Let $X = \bb R^n$, $S_1 = \bb R^n$, and $S_2 = \emptyset$, then $\ECT(S_1)(\nu, t) = \ECT(S_2)(\nu, t)$ for all $(\nu, t) \in \bb S^n \times \bb R$. Indeed, the set $\{x \in \bb R^n\ |\ x \cdot \nu = t \}$ is definably homeomorphic to $\bb R^{n-1}$, and the set $\{x \in \bb R^n\ |\ x \cdot \nu > t\}$ is definably homeomorphic to $\bb R^n$. Hence, the additivity of Euler characteristic implies that
    \[\ECT(S_1)(\nu, t) = \chi(\{x \in \bb R^n\ |\ x \cdot \nu \leq t \}) = \chi(\bb R^{n-1}) + \chi(\bb R^n) = 0.\]
Hence $\ECT(S_1)$ is the zero function. On the other hand, the Euler characteristic of the empty set is always zero, so $\ECT(S_2)$ is also the zero function.
\end{exmp}


Fortunately, we can classify how non-injective is the $\ECT$ with the following theorem. From the theorem, we will also obtain a corollary that shows Example~\ref{ex: non-injective definable sets} is the only such counter-example for the case of definable sets.

\begin{thm}\label{thm:injectivity_hyperplane}
Let $f, g: \bb R^n \to \bb Z$ be constructible functions, then $\ECT(f) = \ECT(g)$ if and only if there exists some $c \in \bb Z$ such that
\[f(x) = \sum_{n = -\infty}^{+\infty} n \mathbbm{1}_{\{f^{-1}(n)\}}(x) \text{ and } g(x) = \sum_{n = -\infty}^{+\infty} (n + c) \mathbbm{1}_{\{f^{-1}(n)\}}(x).\]
In particular, suppose $f(x) = \mathbbm{1}_{S_1}(x)$ and $g(x) = \mathbbm{1}_{S_2}(x)$ for distinct definable sets $S_1, S_2 \subseteq \Rbb^n$ and $\ECT(f) = \ECT(g)$, then $S_1 = \Rbb^n$ and $S_2 = \emptyset$ up to renaming of variables.
\end{thm}

\begin{proof}
    Suppose $\ECT(f) = \ECT(g)$, then Equation~\ref{eq: inversion formula for ECT} implies that there exists integers $\mu \neq \lambda$ such that
    \[(\mu - \lambda) f(x) + \lambda \int_{\bb R^n} f(x) d\chi = (\mu - \lambda) g(x) + \lambda \int_{\bb R^n} g(x) d\chi,\]
    for all $x \in \bb R^n$. Hence, the difference $f(x) - g(x)$ is a constant integer, say $c$, and may be expressed as
    \begin{align}\label{eq: difference is constant}
        g(x) - f(x) = c \coloneqq \frac{\lambda}{\mu - \lambda} (\int_{\bb R^n} f(x) d\chi - \int_{\bb R^n} g(x) d\chi).
    \end{align}
    Since the images of constructible functions are finite, we can write $f(x) = \sum_{i = 1}^n a_i \mathbbm{1}_{A_i}(x)$ such that $A_i = f^{-1}(a_i)$ and $a_i$ ranges through the image of $f(x)$. Similarly, we can write $g(x) = \sum_{j = 1}^m b_j \mathbbm{1}_{B_j}(x)$ such that $B_j = g^{-1}(b_j)$ and $b_j$ ranges through the image of $g(x)$.

    Let $x \in A_i$, then $f(x) - g(x) = c$ by Equation~\ref{eq: difference is constant}. On the other hand $f(x) = a_i$, so $g(x) = b_j = a_i + c$ for some $b_j$ in the image of $g(x)$. Thus, the set function $\{a_1, ..., a_n\} \mapsto \{b_1, ..., b_m\}$ by $a_i \mapsto a_i + c$ is a well-defined injective set function. Similarly, the set function $\{b_1, ..., b_m\} \mapsto \{a_1, ..., a_n\}$ is also a well-defined inverse of the previous set function. Thus, we conclude that $n = m$ and $b_i = a_i + c$ for $i = 1, ..., n$ up to reordering.

     Now for any $x \in A_i$, $g(x) = c + f(x) = c + a_i = b_i$, so $x \in B_i$. Similarly for any $x \in B_i$, $f(x) = g(x) - c = a_i$, thus $x \in A_i$. Hence $A_i = B_i$. Thus, we conclude that
     \[f(x) = \sum_{i = 1}^n a_i \mathbbm{1}_{A_i}(x) \text{ and } g(x) = \sum_{i = 1}^n (a_i + c) \mathbbm{1}_{A_i}(x).\]
     This concludes the proof of the ``only if" direction.
    
    Conversely, suppose $f(x) = \sum_{n = -\infty}^{+\infty} n \mathbbm{1}_{\{f^{-1}(n)\}}(x)$ and $g(x) = \sum_{n = -\infty}^{+\infty} (n + c) \mathbbm{1}_{\{f^{-1}(n)\}}(x)$, then for any $(\nu, t) \in \bb S^{n-1} \times \bb R$, we will compute the difference of their respective Euler characteristic transforms.
    \begin{align*}
        \ECT(g)(\nu, t) - \ECT(f)(\nu, t) &= \sum_{n = -\infty}^{+\infty} \int_{\bb R^n} (n + c) \mathbbm{1}_{\{f^{-1}(n)\} \cap \{x \cdot \nu \leq t\}}(x) d\chi - \int_{\bb R^n} n \mathbbm{1}_{\{f^{-1}(n)\} \cap \{x \cdot \nu \leq t\}}(x) d\chi\\
        &= \sum_{n = -\infty}^{+\infty} \int_{\bb R^n} (n + c - n) \mathbbm{1}_{\{f^{-1}(n)\} \cap \{x \cdot \nu \leq t\}}(x) d\chi\\
        &= c \sum_{n = -\infty}^{+\infty} \int_{\bb R^n} \mathbbm{1}_{\{f^{-1}(n)\} \cap \{x \cdot \nu \leq t\}}(x) d\chi\\
        &= c \int_{\bb R^n} \mathbbm{1}_{\{x \cdot \nu \leq t\}}(x) d\chi\\
        &= c \ECT(\bb R^n)(\nu, t)\\
        &= 0,
    \end{align*}
    where the fourth line follows from the fact that the sets $\{f^{-1}(n)\}_{n \in \bb Z}$ form a finite (disregarding empty sets) partition of $\bb R^n$, and the sixth line follows from Example~\ref{ex: non-injective definable sets}.

    Finally, we will focus on the specific case that $f(x) = \mathbbm{1}_{S_1}(x)$ and $g(x) = \mathbbm{1}_{S_2}(x)$. Without loss of generality, we will assume that $S_1$ is non-empty. Since $\ECT(f) = \ECT(g)$, there exists some $c \in \bb Z$ such that $g(x) = 1 + c$ for all $x \in S_1$ and $g(x) = 0 + c$ for all $x \notin S_1$.

Since $S_1$ is not empty, then let $y$ be any point in $S_1$. $f(y) = 1$ implies that $g(y) = 1 + c$. If $g(y) = 1$, then $c = 0$ and $g(x)$ becomes the indicator function on $S_1$, which is a contradiction to the assumption that $f(x) \neq g(x)$. If $g(y) = 1 + c = 0$, then it follows that $c = -1$ and $g(x) = -1$ for all $x \notin S_1$. Since $g(x)$ takes values only between $0$ and $1$, this can occur only when $S_1 = \bb R^n$.

Thus, $S_1 = \bb R^n$ and $f(x)$ is the constant function with value $1$. Since $c = -1$, $g(x)$ is the constant function with value $0$, which implies that $S_2$ is the empty set.
\end{proof}





\section{Euler Characteristic Transform with Quadric Hypersurfaces}\label{sec:Quadric Hypersurfaces}

In this section, we will define the quadric Euler characteristic transform and prove several injectivity results on this transform. Before going into the $\QECT$ specifically, we will first discuss some results on Radon transforms in Section~\ref{subsec: kernel space} that will be useful in Section~\ref{subsec: QECT} (and the Appendix).

\subsection{Generalized Kernel Spaces}\label{subsec: kernel space}

Here is the general setup we will consider.
\begin{defn}\label{def: general kernel setup}
Let $X \subseteq \bb R^n$ be a definable set, $P \subseteq \bb R^k$ be a compact definable set (called the ``parameter space"), and $f: X \times P \to \bb R$ a definable function.  
\begin{enumerate}
    \item We define $K_f(x, (\xi, t)) \in \CF(X \times P \times \bb R)$ as the indicator function on $\{(x, (\xi, t)) \in X \times P \times \bb R\ |\ f(x, \xi) \leq t\}$ (the kernel function).
    \item We define $K'_f((\xi, t), x) \in \CF(P \times \bb R \times X)$ as the indicator function on $\{(\xi, t), x) \in P \times \bb R \times X\ |\ f(x, \xi) \geq t\}$ (the dual kernel function).
    \item We also define the fiber $K_{x, f} = \{(\xi, t) \in P \times \bb R\ |\ f(x, \xi) \leq t\}$ and the dual fiber $K'_{x', f} = \{(\xi, t) \in P \times \bb R\ |\ f(x', \xi) \geq t\}$.
\end{enumerate}
\end{defn}

Given a constructible function $h: X \to \bb Z$, we are interested in what the function $(R_{K'_f} \circ R_{K_f}) h$ is to be able to prove injectivity results similar to that of Theorem~\ref{thm: hyperplane injective}. We first prove a technical lemma.

\begin{lem}\label{lem::technical_euler_char}
    Let $x, x' \in X$.
    \begin{enumerate}
        \item $\chi(K_{x, f} \cap K_{x', f}') = \chi(\{\xi \in P\ |\ f(x', \xi) - f(x, \xi) \geq 0\})$.
        \item If $x = x'$, then $\chi(K_{x, f} \cap K_{x', f}') = \chi(P)$.
        \item As $(x, x')$ ranges through $X \times X$, the function $(x, x') \mapsto \chi(K_{x, f} \cap K_{x', f}')$ can only take on finitely many values $c_1, ..., c_n$. Furthermore, the preimage $S_i$ of each $c_i$ is a definable subset of $X \times X$.
    \end{enumerate}
\end{lem}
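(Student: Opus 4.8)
The plan is to prove (1) by Fubini's theorem for Euler integrals (Theorem~\ref{thm::Fubini_Theorem}) applied to the projection onto $P$, to read off (2) as the special case $x=x'$, and to deduce (3) from the o-minimal finiteness statement Lemma~\ref{lemma:: finite euler char types} combined with (1).

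For (1), unwind the definitions:
\[K_{x,f} \cap K'_{x',f} = \{(\xi,t) \in P \times \bb R\ |\ f(x,\xi) \leq t \leq f(x',\xi)\},\]
which is definable, being an intersection of fibers of definable sets. Consider the coordinate projection $\pi \colon K_{x,f} \cap K'_{x',f} \to P$, $(\xi,t) \mapsto \xi$, which is definable. For fixed $\xi \in P$ the fiber $\pi^{-1}(\xi) = \{t \in \bb R\ |\ f(x,\xi) \leq t \leq f(x',\xi)\}$ is a closed bounded interval (possibly a single point) when $f(x',\xi) - f(x,\xi) \geq 0$ and is empty otherwise, so $\chi(\pi^{-1}(\xi)) = \mathbbm{1}_{\{\xi \in P\ :\ f(x',\xi) - f(x,\xi) \geq 0\}}(\xi)$. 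Applying Theorem~\ref{thm::Fubini_Theorem} to $\pi$ with the constant constructible function $1$, and using $\int_Z 1\, d\chi = \chi(Z)$, gives
\[\chi(K_{x,f} \cap K'_{x',f}) = \int_P \chi(\pi^{-1}(\xi))\, d\chi(\xi) = \int_P \mathbbm{1}_{\{f(x',\xi) - f(x,\xi) \geq 0\}}(\xi)\, d\chi(\xi) = \chi(\{\xi \in P\ |\ f(x',\xi) - f(x,\xi) \geq 0\}).\]
Part (2) is then immediate: if $x = x'$ the inequality $f(x',\xi) - f(x,\xi) \geq 0$ holds for every $\xi \in P$, so the set appearing in (1) is all of $P$.

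For (3), I would package the estimates into a single definable family. Since $f$ is definable, its graph $\{(x,\xi,s) \in X \times P \times \bb R\ |\ f(x,\xi) = s\}$ is definable, and from it one builds (by duplicating the $X$-variable, comparing the two value coordinates, and projecting those coordinates away) the definable set
\[\Sigma = \{(x,x',\xi) \in X \times X \times P\ |\ f(x',\xi) - f(x,\xi) \geq 0\} \subseteq \bb R^{2n+k}.\]
For $(a,a') \in X \times X$ the slice $\Sigma_{(a,a')}$ is exactly $\{\xi \in P\ |\ f(a',\xi) - f(a,\xi) \geq 0\}$, whose Euler characteristic equals $\chi(K_{a,f} \cap K'_{a',f})$ by part (1). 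Lemma~\ref{lemma:: finite euler char types} applied to $\Sigma$ then shows that $\chi(\Sigma_{(a,a')})$ attains only finitely many values $c_1,\dots,c_n$ as $(a,a')$ ranges over $\bb R^{2n}$ (hence over $X \times X$), and that each level set $\{(a,a')\ :\ \chi(\Sigma_{(a,a')}) = c_i\}$ is definable; intersecting with $X \times X$ yields the desired definable sets $S_i$.

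I do not expect a genuine obstacle: the only places needing care are the definability bookkeeping — verifying that $K_{x,f} \cap K'_{x',f}$, the projection $\pi$, and the family $\Sigma$ are all definable so that Theorem~\ref{thm::Fubini_Theorem} and Lemma~\ref{lemma:: finite euler char types} apply — and the elementary fiber computation identifying $\pi^{-1}(\xi)$ as an interval or the empty set. The former, namely writing $\Sigma$ explicitly as a first-order combination of the graph of $f$, is the step most prone to carelessness, so I would write it out in detail; everything else is routine.
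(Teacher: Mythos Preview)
Your argument is correct, and parts~(2) and~(3) coincide with the paper's proof. For part~(1), however, you take a genuinely different route. The paper first applies the definable homeomorphism $(\xi,t)\mapsto(\xi,t-f(x,\xi))$ to shift the fiber intervals to $[0,f(x',\xi)-f(x,\xi)]$, and then uses the straight-line homotopy $H((\xi,t),s)=(\xi,(1-s)t)$ to deformation-retract onto the base $\{\xi\in P: f(x',\xi)-f(x,\xi)\geq 0\}$; it invokes the compactness of $P$ to conclude that this homotopy preserves the combinatorial Euler characteristic. Your Fubini argument is more direct: it bypasses the homeomorphism and homotopy entirely, and in particular does not need the compactness hypothesis on $P$, since Theorem~\ref{thm::Fubini_Theorem} holds for arbitrary definable maps. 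The paper's approach is more geometric in flavor, but yours is both shorter and slightly more general.
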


\begin{proof}
    For Lemma~\ref{lem::technical_euler_char}(1), we first rewrite the set $K_{x, f} \cap K_{x', f}'$ as follows.
    \begin{align*}
        K_{x, f} \cap K_{x', f}' &= \{(\xi, t) \in P \times \bb R\ |\ t \geq f(x, \xi) \text{ and } f(x', \xi) \geq t\}\\
        &= \{(\xi, t) \in P \times \bb R\ |\ f(x, \xi) \leq t \leq f(x', \xi)\}\\
        &= \{(\xi, t) \in P \times \bb R\ |\ t \in [f(x, \xi), f(x', \xi)] \}.
    \end{align*}
    By considering the definable homeomorphism $\phi: P \times \bb R \to P \times \bb R$ by $\phi(\xi, t) = (\xi, t - f(x, \xi))$, we have that
    \begin{align*}
        \chi(K_{x, f} \cap K_{x', f}') &= \chi(\phi(K_{x, f} \cap K_{x', f}))\\
        &= \chi(\{(\xi, t) \in P \times \bb R\ |\ t \in [0, f(x', \xi) - f(x, \xi)] \}).
    \end{align*}
    Since $P$ is compact and definable, the set $A \coloneqq \{(\xi, t) \in P \times \bb R\ |\ t \in [0, f(x', \xi) - f(x, \xi)] \}$ is compact and definable. Define the straight-line homotopy $H: A \times [0, 1] \to A$ as $H((\xi, t), s) = (\xi, (1 - s)t)$ for all $((\xi, t), s) \in A \times [0, 1]$, this produces a deformation retract of $A$ onto the set $\{\xi \in P\ |\ f(x', \xi) - f(x, \xi) \geq 0\}$, which preserves the Euler characteristic because both sets are compact and definable. The proof of Lemma~\ref{lem::technical_euler_char}(1) is thus completed.
    
    For Lemma~\ref{lem::technical_euler_char}(2), $x = x'$ implies that $f(x', p) - f(x, p) = 0$ for any $p \in P$. It then follows from Lemma~\ref{lem::technical_euler_char}(1) that $\chi(K_{x, f} \cap K_{x, f}') = \chi \{p \in P\ |\ 0 = 0\} = \chi(P)$.

    For Lemma~\ref{lem::technical_euler_char}(3), we implement Lemma~\ref{lemma:: finite euler char types} as follows. We define $S$ as the definable set
    \[S \coloneqq \{(x, x', \xi) \in X \times X \times P \subseteq \Rbb^{2n + k}\ |\ f(x', \xi) - f(x, \xi) \geq 0\}.\]
    In this case, $S_{(x, x')} = \{\xi \in \bb R^{k}\ |\ f(x', \xi) - f(x, \xi) \geq 0\}$. Thus, Lemma~\ref{lem::technical_euler_char}(3) follows directly from Lemma~\ref{lemma:: finite euler char types}.
\end{proof}

It is not generally true that for $x \neq x' \in P$, the value of $\chi(K_{x, f} \cap K'_{x', f})$ remains constant, as will be shown in the proof of Theorem~\ref{thm: quadric zero vector}. However, we can still compute the function $(R_{K'_f} \circ R_{K_f}) h$.

\begin{lem}\label{lem::formula_inversion_type}
    Following the context of Definition~\ref{def: general kernel setup} and Lemma~\ref{lem::technical_euler_char}(3) and fix $c_1 = \chi(P)$, then for any $x' \in X$,
    \[ (R_{K'_f} \circ R_{K_f}) h(x') = \chi(P) \int_{X} h(x) \mathbbm{1}_{S_1}(x, x') d\chi(x) + \sum_{i = 2}^n c_i \int_{X} h(x) \mathbbm{1}_{S_i}(x, x') d\chi(x).\]
    In particular, if $S_1 = \Delta$ is the diagonal of $X \times X$, then
    \[(R_{K'} \circ R_K) h(x') = \chi(P) h(x') + \sum_{i = 2}^n c_i \int_{X} h(x) \mathbbm{1}_{S_i}(x, x') d\chi(x).\]
\end{lem}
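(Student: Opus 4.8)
The plan is to unwind the definition of the composite $R_{K'_f}\circ R_{K_f}$, convert the two nested Euler integrals into a single integral over $X\times P\times\bb R$ via Fubini's theorem, re-fiber it over $X$ (again by Fubini), recognize the resulting inner integral as the Euler characteristic $\chi(K_{x,f}\cap K'_{x',f})$ computed in Lemma~\ref{lem::technical_euler_char}, and finally expand that quantity over the finite definable partition $\{S_i\}$ furnished by Lemma~\ref{lem::technical_euler_char}(3).

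Concretely, fix $x'\in X$. First I would write
\[(R_{K'_f}\circ R_{K_f})h(x') = \int_{P\times\bb R}\left(\int_X h(x)\,K_f(x,(\xi,t))\,d\chi(x)\right) K'_f((\xi,t),x')\,d\chi(\xi,t),\]
and note that since $K'_f((\xi,t),x')$ is independent of $x$, the product $F(x,\xi,t):=h(x)\,K_f(x,(\xi,t))\,K'_f((\xi,t),x')$ is a constructible function on $X\times P\times\bb R$ (a product of a constructible function with indicators of definable sets). Applying Theorem~\ref{thm::Fubini_Theorem} to the projection $X\times P\times\bb R\to P\times\bb R$ shows the displayed iterated integral equals $\int_{X\times P\times\bb R}F\,d\chi$; applying it instead to the projection $X\times P\times\bb R\to X$ shows this same integral equals $\int_X\big(\int_{P\times\bb R}F(x,\xi,t)\,d\chi(\xi,t)\big)\,d\chi(x)$. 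Since, for fixed $x$ and $x'$, the function $(\xi,t)\mapsto K_f(x,(\xi,t))K'_f((\xi,t),x')$ is exactly the indicator of $K_{x,f}\cap K'_{x',f}$, the inner integral is $\chi(K_{x,f}\cap K'_{x',f})$, so
\[(R_{K'_f}\circ R_{K_f})h(x') = \int_X h(x)\,\chi(K_{x,f}\cap K'_{x',f})\,d\chi(x).\]

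It then remains to invoke Lemma~\ref{lem::technical_euler_char}(3): for fixed $x'$ the map $x\mapsto\chi(K_{x,f}\cap K'_{x',f})$ takes only the values $c_1,\dots,c_n$, with $\chi(K_{x,f}\cap K'_{x',f})=c_i$ precisely when $(x,x')\in S_i$. Because each slice $\{x:(x,x')\in S_i\}$ is a slice of a definable set and hence definable, this function of $x$ is constructible and equals $\sum_{i=1}^n c_i\,\mathbbm 1_{S_i}(\cdot,x')$; by linearity of the Euler integral over this finite sum,
\[(R_{K'_f}\circ R_{K_f})h(x') = \sum_{i=1}^n c_i\int_X h(x)\,\mathbbm 1_{S_i}(x,x')\,d\chi(x).\]
Separating the $i=1$ term, with $c_1=\chi(P)$ (which is legitimately one of the values by Lemma~\ref{lem::technical_euler_char}(2)), yields the first formula. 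For the ``in particular'' statement, if $S_1=\Delta$ then $\mathbbm 1_{S_1}(\cdot,x')$ is the indicator of the point $\{x'\}$, so $\int_X h(x)\mathbbm 1_{S_1}(x,x')\,d\chi(x)=h(x')\chi(\{x'\})=h(x')$, giving the second formula.

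The computation is largely bookkeeping; the two points needing care are (i) that Theorem~\ref{thm::Fubini_Theorem} is applied to $F$, which is generally not compactly supported — this is exactly the version of Fubini without the compact-support hypothesis discussed just after that theorem in Section~\ref{sec:Background} — and (ii) that $x\mapsto\chi(K_{x,f}\cap K'_{x',f})$ is a genuine constructible function on $X$ so that term-by-term linearity is valid, which is precisely what Lemma~\ref{lem::technical_euler_char}(3) together with definability of slices provides. I expect (i) to be the only genuinely delicate point, and it has essentially been settled already in the Background section.
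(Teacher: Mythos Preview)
Your proof is correct and follows essentially the same approach as the paper: unfold the composite, swap the order of integration via Fubini, identify the inner integral as $\chi(K_{x,f}\cap K'_{x',f})$, and expand over the partition $\{S_i\}$ from Lemma~\ref{lem::technical_euler_char}(3). Your version is in fact slightly more careful than the paper's, explicitly justifying constructibility of the inner integrand and flagging the non-compactly-supported Fubini issue.
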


\begin{proof}
    By Lemma~\ref{lem::technical_euler_char}(3), we may write $\chi(K_{x, f} \cap K'_{x', f}) = \sum_{i = 1}^n c_i \mathbbm{1}_{S_i}(x, x')$ as a function of $x$ and $x'$. By Lemma~\ref{lem::technical_euler_char}, the diagonal $\Delta \subseteq X \times X$ is contained in exactly one of the $S_i$, say $S_1$. Then
    \begin{align*}
        (R_{K'_f} \circ R_{K_f}) h(x') &= \int_{P \times \bb R} K'(y, x') [\int_{X} h(x) K(x, y) d\chi(x)] d\chi(y) \\
        &= \int_{X} h(x) \left[\int_{P \times \bb R} K'_f(y, x') K_f(x, y) d\chi(y)\right] d\chi(x)\\
        &= \int_{X} h(x) \chi(K_{x, f} \cap K'_{x', f})\\
        &= \int_{X} h(x) \sum_{i = 1}^n c_i \mathbbm{1}_{S_i}(x, x') d\chi(x)\\
        &= c_1 \int_{X} h(x) \mathbbm{1}_{S_1}(x, x') d\chi(x) + \sum_{i = 2}^n c_i \int_{X} h(x) \mathbbm{1}_{S_i}(x, x') d\chi(x)\\
        &= \chi(P) \int_{X} h(x) \mathbbm{1}_{S_1}(x, x') d\chi(x) + \sum_{i = 2}^n c_i \int_{X} h(x) \mathbbm{1}_{S_i}(x, x') d\chi(x),
    \end{align*}
    where the second line follows from Theorem~\ref{thm::Fubini_Theorem}.
\end{proof}

\subsection{Quadric Euler Characteristic Transform}\label{subsec: QECT}
Let $V$ be the space of real $n \times n$ symmetric matrices equipped with a definable norm $||\bullet||_V$ (recall the notations in Definition~\ref{def: definable norm}). Recall that a general quadric surface is given by
\[x^T A x + v \cdot x = t\]
where $A \in V$, $v \in \Rbb^n$, and $t \in \Rbb$. Then it would seem that a natural definition of $\QECT$ on a constructible function $f: \bb R^n \to \Zbb$ would be
\[\QECT(f): S^V \times \bb S^{n-1} \times \Rbb \to \Zbb, \QECT(f)(A, v, t) = \int_{X} f(x) \mathbbm{1}_{X^{A, \nu}_t}(x) d\chi(x), \]
where $X^{A, \nu}_t$ denotes the set $\{x^T A x + v \cdot x \leq t\}$.

There is a question of whether our domain of choice is the best choice of domain. On one hand, $\QECT$ seems like a natural thematic generalization of the $\ECT$. However, there are no choices of $A \in S^V$ such that $||A||_V = 0$, so we cannot recover the $\ECT$ from this definition of the $\QECT$.

The domain of $\ECT(f)$ is $\bb S^{n-1} \times \Rbb$ avoids the degenerate case of the zero vector. Thus, what our more general $\QECT$ wants is to consider the case where $A$ and $v$ are both not identically zero. This suggests that we should think of the norm of $(A, v)$ as an element of $V \times \Rbb^n$, which we will refer to as the space $W$. We define the norm $||(A, v)||_W = ||A||_{V} + |v|$ for all $(A, v) \in W = V \times \bb R^n$.

Thus, we adjust our definition to the following.
\begin{defn}
Let $X \subseteq \bb R^n$ be a definable set and $f: X \to \bb Z$ be a constructible function, the quadric Euler characteristic transform of $f$ is the function $\QECT(f): S^W \times \bb R \to \bb Z$ defined by
    \[\QECT(f)(A, v, t) = \int_{X} \left( f(x) \mathbbm{1}_{X^{A, \nu}_t}(x) \right)  d\chi(x).\]
\end{defn}

\begin{exmp}
Let $S_1 = \bb R^n$, $S_2 = \emptyset$, and $I$ be the $n \times n$ identity matrix, then $\QECT(S_1)(I, 0, t) \neq \QECT(S_2)(I, 0, t)$. Hence, the $\QECT$ can tell the difference between $S_1$ and $S_2$ compared to Example~\ref{ex: non-injective definable sets}.
\end{exmp}

Now we will analyze a few properties of the $\QECT$. First of all, when $v = 0$ is fixed to be the zero vector, we note that the function $f \mapsto \{(A, t) \mapsto \QECT(f)(A, 0, t)\}$, which we will refer to as $\QECT(-, 0, -): \CF(\bb R^n) \to \CF(S^V \times \bb R)$, is not injective.

\begin{exmp}\label{ex: non-injective zero}
    Let $p \in \bb R^n$ be the vector whose components are all unity, $f(x) = \mathbbm{1}_{p}(x)$, and $g(x) = \mathbbm{1}_{-p}(x)$, then $\QECT(f)(A, 0, t) = \QECT(f)(A, 0, t)$ for all $(A, t) \in S^V \times \bb R$.
\end{exmp}

However, we can see that the failure to detect signs in Example~\ref{ex: non-injective zero} is the only such locus of non-injectivity with the following theorem.

\begin{thm}\label{thm: quadric zero vector}
Let $v = 0$ be fixed, the function $\QECT(-, 0, -): \operatorname{CF}(\bb R^n) \to \operatorname{CF}(S^V \times \bb R)$ is ``injective up to sign". More precisely, let $h: X \to \bb R^n$, we obtain an inversion formula reminiscent of the Schapira inversion formula,
    \[(R_{K'_f} \circ R_{K_f} h)(x') = (\mu - \lambda) \sum_{z \in \{+x', -x'\}} h(z) + \lambda (\int_X h d\chi) \mathbbm{1}_{X},\]
    where $\mu$ and $\lambda$ are distinct integers.
\end{thm}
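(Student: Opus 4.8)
The plan is to fit $\QECT(-,0,-)$ into the abstract framework of Section~\ref{subsec: kernel space}: I take parameter space $P=S^V$ and the map $f\colon\bb R^n\times S^V\to\bb R$, $f(x,A)=x^T A x$, which is definable because it is polynomial. Then $\QECT(-,0,-)=R_{K_f}$, and Lemma~\ref{lem::formula_inversion_type} applies once I identify the function $(x,x')\mapsto\chi(K_{x,f}\cap K'_{x',f})$. By Lemma~\ref{lem::technical_euler_char}(1) this equals $(x,x')\mapsto\chi\big(\{A\in S^V : (x')^T A x'-x^T A x\ge 0\}\big)$, so the whole content of the proof is the evaluation of these Euler characteristics.

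First I would study the map $A\mapsto (x')^T A x'-x^T A x=\Tr\!\big(A(x'(x')^T-xx^T)\big)$, a linear functional on the space $V$ of dimension $d=\tfrac{n(n+1)}{2}$. Since the trace form is nondegenerate on $V$ and $x'(x')^T-xx^T$ is symmetric, this functional vanishes identically exactly when $x'(x')^T=xx^T$, i.e.\ exactly when $x'=\pm x$ (a rank-one symmetric matrix remembers its generating vector up to sign). This dichotomy is the heart of the matter. When $x'=\pm x$, the relevant set is all of $S^V$, and I would compute $\chi(S^V)$ by observing that radial projection $y\mapsto y/\|y\|_V$ is a definable homeomorphism $\bb S^{d-1}\to S^V$ with definable inverse (radial projection back to $\bb S^{d-1}$), whence $\chi(S^V)=\chi(\bb S^{d-1})=1+(-1)^{d-1}=:\mu$, which lies in $\{0,2\}$. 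When $x'\ne\pm x$, the functional is nonzero, and under that same homeomorphism $\{A\in S^V : (x')^T A x'-x^T A x\ge 0\}$ corresponds to the intersection of $\bb S^{d-1}$ with a closed half-space through the origin, a closed hemisphere, which is definably homeomorphic to the closed $(d-1)$-ball and hence has Euler characteristic $\lambda:=1$. In particular $\mu\ne\lambda$ in every dimension.

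With this in hand, the function $(x,x')\mapsto\chi(K_{x,f}\cap K'_{x',f})$ takes exactly the two values $\mu$ and $\lambda$, the value $\mu$ occurring precisely on the definable set $S_1:=\{(x,x') : x'=\pm x\}$, which contains the diagonal. I would then apply Lemma~\ref{lem::formula_inversion_type} with $c_1=\chi(P)=\mu$ and the single remaining piece $S_2=(X\times X)\setminus S_1$ carrying the value $\lambda$; substituting $\mathbbm{1}_{S_2}=1-\mathbbm{1}_{S_1}$ collapses its formula to
\[(R_{K'_f}\circ R_{K_f})h(x')=(\mu-\lambda)\int_X h(x)\,\mathbbm{1}_{S_1}(x,x')\,d\chi(x)+\lambda\Big(\int_X h\,d\chi\Big)\mathbbm{1}_X.\]
For fixed $x'$, the slice $\{x\in X : (x,x')\in S_1\}$ is the finite set $\{\pm x'\}\cap X$ (just $\{0\}$ when $x'=0$), so the Euler integral over it is $\sum_{z\in\{\pm x'\}}h(z)$, which yields exactly the asserted inversion formula. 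It also explains Example~\ref{ex: non-injective zero}, since the right-hand side depends on $h$ only through the symmetrized values $h(x')+h(-x')$.

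The main difficulty --- really the only nonmechanical step --- is the Euler-characteristic bookkeeping in the second paragraph: confirming that $S^V$ is definably homeomorphic to the round sphere $\bb S^{d-1}$ and each of its closed hemispheres to the round ball $D^{d-1}$ (so that their Euler characteristics are $1+(-1)^{d-1}$ and $1$), and treating the degenerate locus $x'=\pm x$ carefully, including the edge case $x'=0$ where $\{\pm x'\}$ is a single point. Everything after that is a direct substitution into Lemmas~\ref{lem::technical_euler_char} and~\ref{lem::formula_inversion_type}.
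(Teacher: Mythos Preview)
Your proposal is correct and follows essentially the same strategy as the paper: fit the transform into the framework of Definition~\ref{def: general kernel setup} with $P=S^V$ and $f(x,A)=x^TAx$, compute $\chi(K_{x,f}\cap K'_{x',f})$ via Lemma~\ref{lem::technical_euler_char}, and feed the resulting two-value function into Lemma~\ref{lem::formula_inversion_type}. The paper establishes nonvanishing of the linear functional through the factorization $(x')^TAx'-x^TAx=(x'+x)^TA(x'-x)$ and then appeals to Alexander duality to see that $S^V\setminus S^{\ker\phi}$ has two components; your trace-form argument (via nondegeneracy of $\Tr$ on $V$ and the fact that $xx^T$ determines $x$ up to sign) and your direct identification of the closed hemisphere with a ball are slightly more economical routes to the same computations, and your explicit check that $\mu\in\{0,2\}\ne 1=\lambda$ fills in a point the paper leaves implicit.
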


\begin{proof}
Since $v = 0$, $||A||_V = 1 - |v| = 1$. Thus, the function $(A, t) \mapsto \QECT(f)(A, 0, t)$ has domain $S^V \times \bb R$. Following the setup of Definition~\ref{def: general kernel setup}, we choose $X = \bb R^n$, $P = S^V$, and $f: X \times P \to \bb R$ to be the function $(x, A) \mapsto x^T A x$. By Lemma~\ref{lem::technical_euler_char} and the property that $A$ is symmetric, $\chi(K_{x, f} \cap K'_{x', f}) = \chi(\{A \in P\ |\ (x')^T A (x') - (x)^T A (x) \geq 0\}) = \chi(\{A \in P\ |\ (x' + x)^T A (x' - x) \geq 0\})$.

If $x = x'$ or $x = -x'$, then $(x' + x)^T A (x' - x) = 0$ so $\chi(K_{x, f} \cap K'_{x', f}) = \chi(S^V)$.

Otherwise, suppose $x \notin \{\pm x'\}$, then consider the function $\phi: V \to \bb R$ given by $\phi(A) = (x' + x)^T A (x' - x)$. Since $(x + x')$ and $(x' - x)$ are both non-zero vectors, $\phi$ is a surjective linear transformation, and hence $\ker(\phi)$ is a codimension $1$ linear subspace of $V$. $\ker(\phi)$ inherits a natural norm from $V$ and hence $S^V \cap \ker(\phi) = S^{\ker(\phi)}$.

Since all norms on a finite-dimensional real vector space are equivalent, the map $\psi: S^V \to \bb S^{\dim V - 1}$ by $\psi(x) = \frac{x}{|x|}$ is a definable homeomorphism whose restriction to $S^{\ker(\phi)}$ gives a homeomorphism between $S^{\ker(\phi)}$ and the $\ell_2$ unit sphere of $\ker(\phi)$.

By Alexander duality, $\Tilde{H}^0(S^V \setminus S^{\ker(\phi)}) \cong \Tilde{H}^{\dim V - 1 - 1}(S^{\ker(\phi)}) \cong \bb Z$, so $S^V \setminus S^{\ker(\phi)}$ has two connected components. Since $\phi$ is an odd function, we will denote the two connected components as $S^V_+$ and $S^V_-$ corresponding to the locus where $\phi$ is positive and negative respectively. Hence, the map $\psi: S^V \to \bb S^{\dim V - 1}$ brings $\ker(f) \cup S^V_+$ homeomorphically to a closed hemisphere of $\bb S^{\dim V - 1}$. Thus, the set $\ker(f) \cup S^V_+$ is compact and contractible, so we conclude that $\chi(K_{x, f} \cap K'_{x', f}) = 1$.

Let $\mu = \chi(P)$ and $\lambda = 1$, by Lemma~\ref{lem::formula_inversion_type}, we can write
\[(R_{K'_f} \circ R_{K_f}) h(x') = \mu \int_{\bb R^n} h(x) \mathbbm{1}_{\pm \Delta}(x, x') d\chi(x) + \lambda \int_{\bb R^n} h(x) \mathbbm{1}_{\{X \times X - \pm \Delta\}}(x, x') d\chi(x),\]
where $\mathbbm{1}_{\pm \Delta}(x, x') = 1$ if $x \in \{\pm x'\}$ and is $0$ otherwise. Finally, we can furthermore simplify the expression as
\begin{align*}
    (R_{K'_f} \circ R_{K_f}) h(x') &=  \mu \int_{\{\pm x'\}} h(x) d\chi(x) + \lambda \int_{\bb R^n - \{\pm x'\}} h(x) d\chi(x)\\
    &= (\mu - \lambda) \int_{\{\pm x'\}} h(x) d\chi(x) + \lambda \int_{\bb R^n} h(x) d\chi(x)\\
    &= (\mu - \lambda) \sum_{z \in \{\pm x'\}} h(z) +  \lambda \int_{\bb R^n} h(x) d\chi(x).
\end{align*}
This concludes the proof of Theorem~\ref{thm: quadric zero vector}.
\end{proof}

Theorem~\ref{thm: quadric zero vector} examined what happens to the $\QECT$ when its vector component is fixed. We are also interested in what happens to the $\QECT$ when its matrix component is fixed.

\begin{thm}\label{thm: quadric interpolation}
Let $B_R(0)$ denote the closed ball of radius $R$ for some $R \geq 0$. Fix the norm on $V$ to be the operator norm $||\bullet||_{op}$, and let $A \in V$ be fixed such that $||A||_{op} < \frac{1}{1 + 2R^2}$. Define the function $\QECT(A, -, -): \operatorname{CF}(B_R(0)) \to \operatorname{CF}(P \times \bb R)$ by $f \mapsto \{(v, t) \mapsto \QECT(f)(A, v, t)\}$, where $P$ is the sphere of radius $1 - ||A||_{op}$ in $\bb R^n$. Then, $\QECT(A, -, -)$ is injective.
\end{thm}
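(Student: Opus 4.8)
The plan is to realize $\QECT(A,-,-)$ as a Radon transform in the sense of Definition~\ref{def: general kernel setup} and then run the machinery of Lemma~\ref{lem::technical_euler_char} and Lemma~\ref{lem::formula_inversion_type}, hoping that the ``off-diagonal'' Euler characteristics $c_i$ all coincide so that a Schapira-style inversion formula drops out. Concretely, with $A$ fixed I would set $X = B_R(0)$, let the parameter space be $P = \{v \in \bb R^n : |v| = 1 - \|A\|_{op}\}$ (a compact definable sphere, as required), and take $f : X \times P \to \bb R$ to be $f(x,v) = x^T A x + v \cdot x$, which is definable. The kernel $K_f$ is exactly the indicator of $\{(x,v,t) : x^T A x + v\cdot x \le t\}$, so $R_{K_f} h$ recovers $\{(v,t)\mapsto \QECT(h)(A,v,t)\}$, and the composite $R_{K'_f}\circ R_{K_f}$ is what Lemma~\ref{lem::formula_inversion_type} computes.

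The heart of the argument is to evaluate, for $x,x' \in B_R(0)$,
\[
\chi(K_{x,f}\cap K'_{x',f}) \;=\; \chi\!\left(\{v \in P : f(x',v) - f(x,v) \ge 0\}\right) \;=\; \chi\!\left(\{v \in P : (x' + x)^T A\,(x' - x) + v\cdot(x' - x) \ge 0\}\right),
\]
using Lemma~\ref{lem::technical_euler_char}(1). When $x = x'$ this set is all of $P$, contributing $\chi(P)$, which will be the ``$\mu$'' of the inversion formula. When $x \ne x'$, the linear functional $v \mapsto v\cdot(x'-x)$ is nonzero on $P$, so the set in question is the intersection of $P$ with a closed affine half-space $\{v : v\cdot(x'-x) \ge -\,(x'+x)^T A (x'-x)\}$. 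The key quantitative point is that the shift $|(x'+x)^T A(x'-x)| \le \|A\|_{op}\,|x'+x|\,|x'-x| \le \|A\|_{op}\cdot 2R \cdot 2R = 4R^2\|A\|_{op}$ — wait, more carefully $|x'+x| \le 2R$ and $|x'-x|\le 2R$ gives $4R^2\|A\|_{op}$; I should instead bound it against the radius $1-\|A\|_{op}$ of $P$ and the modulus $|x'-x|$ of the linear functional, and the hypothesis $\|A\|_{op} < \frac{1}{1+2R^2}$ is precisely what guarantees the bounding hyperplane stays strictly ``inside'' the sphere $P$ in the relevant sense, i.e. the half-space $\{v\cdot(x'-x)\ge -(x'+x)^TA(x'-x)\}$ meets the open ball bounded by $P$ but its complement also meets that open ball. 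Granting this, $P$ minus an open hemisphere-like cap is a closed spherical cap, hence compact and contractible, so $\chi(K_{x,f}\cap K'_{x',f}) = 1$ for every $x \ne x'$.

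With those two values in hand — $\chi(P)$ on the diagonal $\Delta$ and $1$ off it — Lemma~\ref{lem::formula_inversion_type} (with $S_1 = \Delta$, so $n=2$, $c_1 = \chi(P)$, $c_2 = 1$) gives
\[
(R_{K'_f}\circ R_{K_f})h(x') \;=\; \chi(P)\,h(x') + \int_{B_R(0)} h(x)\,\mathbbm{1}_{X\times X \setminus \Delta}(x,x')\,d\chi(x) \;=\; (\chi(P) - 1)\,h(x') + \left(\int_{B_R(0)} h\,d\chi\right)\mathbbm{1}_X,
\]
so that writing $\mu = \chi(P)$, $\lambda = 1$ we recover $(R_{K'_f}\circ R_{K_f})h = (\mu-\lambda)h + \lambda\big(\int_X h\,d\chi\big)\mathbbm{1}_X$. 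Since $P$ is an $(n-1)$-sphere, $\chi(P) = 1 + (-1)^{n-1} \ne 1$ (this needs $n \ge 1$; for $n=1$, $P$ is two points and $\chi(P)=2\ne 1$ as well), so $\mu - \lambda \ne 0$ and the operator $h \mapsto (\mu-\lambda)h + \lambda(\int_X h\,d\chi)\mathbbm{1}_X$ is invertible on $\operatorname{CF}(B_R(0))$ — explicitly, apply $\int_X(-)\,d\chi$ to solve for $\int_X h\,d\chi$ in terms of the image (using $\int_X \mathbbm{1}_X\,d\chi = \chi(B_R(0)) = 1$), then back-substitute. Hence $R_{K_f} = \QECT(A,-,-)$ is injective, and being injective it is invertible onto its image, completing the proof.

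\textbf{Main obstacle.} The delicate step is the geometric claim that for all $x \ne x'$ in $B_R(0)$ the ``tilted half-space'' $\{v \in P : (x'+x)^TA(x'-x) + v\cdot(x'-x)\ge 0\}$ is a proper, nonempty, contractible spherical cap of $P$ — i.e. that the affine hyperplane $v\cdot(x'-x) = -(x'+x)^TA(x'-x)$ genuinely separates $P$ into two nonempty pieces rather than missing $P$ entirely. This is exactly where the quantitative hypothesis $\|A\|_{op} < \tfrac{1}{1+2R^2}$ must be used: one must check $|(x'+x)^TA(x'-x)| < |x'-x|\cdot(1-\|A\|_{op})$ for all such $x,x'$, so that the constant term of the affine functional is dominated by its value range on the sphere $P$ of radius $1-\|A\|_{op}$. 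The inequality $|x'+x|\le 2R$ together with $\|A\|_{op}(1+2R^2) < 1$ should close this, but getting the constants to line up (and handling the boundary/degenerate cases cleanly, plus confirming the cap is contractible rather than merely connected) is the real content.
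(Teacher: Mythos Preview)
Your plan is exactly the paper's proof: same $X=B_R(0)$, same parameter sphere $P$ of radius $1-\|A\|_{op}$, same $f(x,v)=x^TAx+v\cdot x$, same appeal to Lemma~\ref{lem::technical_euler_char} and Lemma~\ref{lem::formula_inversion_type}, and the same Schapira-type conclusion with $\mu=\chi(P)$, $\lambda=1$. For the quantitative step you flag as the obstacle, the paper does not bound $(x'+x)^TA(x'-x)$ via $\|A\|_{op}\,|x'+x|\,|x'-x|$ as you first try; it instead uses the symmetry of $A$ to rewrite $(x'+x)^TA(x'-x)=(x')^TAx'-x^TAx$ and then estimates
\[
|(x')^TAx'-x^TAx|\ \le\ \|A\|_{op}(|x|^2+|x'|^2)\ \le\ 2R^2\|A\|_{op}\ <\ \frac{2R^2}{1+2R^2}\ =\ 1-\frac{1}{1+2R^2}\ <\ 1-\|A\|_{op},
\]
arguing that the affine hyperplane $\phi^{-1}(0)$ is therefore displaced from the origin by less than the radius of $P$ and so still cuts $P$ in a proper contractible closed cap with $\chi=1$.

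One remark worth recording: the inequality you isolate, $|(x'+x)^TA(x'-x)|<(1-\|A\|_{op})\,|x'-x|$, is indeed the precise condition for the hyperplane $\{v:v\cdot(x'-x)=-(x'+x)^TA(x'-x)\}$ to meet the open ball bounded by $P$, since its distance from the origin is $|c|/|x'-x|$ rather than $|c|$. The paper's displayed bound yields only $|c|<1-\|A\|_{op}$, i.e.\ it drops the factor $|x'-x|$; your alternative estimate $|c|\le \|A\|_{op}\,|x'+x|\,|x'-x|\le 2R\,\|A\|_{op}\,|x'-x|$ \emph{does} carry that factor but produces the threshold $\|A\|_{op}<\tfrac{1}{1+2R}$, which coincides with the stated hypothesis $\|A\|_{op}<\tfrac{1}{1+2R^2}$ only when $R\ge 1$. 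So your instinct that ``getting the constants to line up'' is the genuine content here is well placed.
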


\begin{rem}\label{rem: interpolation}
Theorem~\ref{thm: quadric interpolation} interpolates between Theorem~\ref{thm: quadric zero vector} and Theorem~\ref{thm: hyperplane injective} in the following sense.
\begin{enumerate}
    \item When $||A||_{op} = 1$, the inequality $1 < \frac{1}{1 + 2R^2}$ does not hold for any value of $R$. This is reflective of the fact that $(A, t) \mapsto \QECT(A, 0, t)$ is only injective up to signs in Theorem~\ref{thm: quadric zero vector} and Example~\ref{ex: non-injective zero}.
    \item When $||A||_{op} = 0$, $A$ is the zero matrix and the $\QECT$ becomes the usual $\ECT$, which is injective by Theorem~\ref{thm: hyperplane injective} no matter what $R$ is. This reflects the fact that the inequality $0 < \frac{1}{1 + 2 R^2}$ is satisfied for any value of $R$.
\end{enumerate}
The requirement for the norm on $V$ to be the operator norm is not strictly necessary. The same statement holds for definable norm $||\bullet||_V$ on $V$ that satisfies the property $||x^T A x||_V \leq ||A||_V |x|^2$. Common examples include the Frobenius norm and the nuclear norm. Furthermore, by adjusting the constant $\frac{1}{1 + 2R^2}$ appropriately, similar statements for any definable norm on $V$ will hold.
\end{rem}

Now we will prove Theorem~\ref{thm: quadric interpolation}.

\begin{proof}[Proof of Theorem~\ref{thm: quadric interpolation}]
Following the setup of Definition~\ref{def: general kernel setup}, we choose $X = B_R(0)$, $P = \{v \in \bb R^n\ |\ |v| = 1 - ||A||_{op}\}$, and $f: X \times P \to \bb R$ to be the function $(x, v) \mapsto x^T A x + v \cdot x$. By Lemma~\ref{lem::technical_euler_char} and the property that $A$ is symmetric, $\chi(K_{x, f} \cap K'_{x', f}) = \chi(\{v \in P\ |\ (x')^T A (x') + v \cdot (x') - (x)^T A (x) - v \cdot x \geq 0\}) = \chi(\{v \in P\ |\ (x' + x)^T A (x' - x) + v \cdot (x' - x) \geq 0\})$.

If $x = x'$, then Lemma~\ref{lem::technical_euler_char} tells us that $\chi(K_{x, f} \cap K'_{x', f}) = \chi(P)$. Otherwise, if $x \neq x'$, we can consider the function $\phi: \bb R^n \to \bb R$ by $\phi(v) = (x' + x)^T A (x' - x) + v \cdot (x' - x)$. Since $x' - x$ is not the zero vector, $\phi$ is a surjective affine map and $\ker(\phi)$ is a hyperplane in $\bb R^n$. Furthermore, the vector $\frac{(x' - x)}{|x - x'|}$ is a unit normal vector to $\ker(\phi)$, and $\ker(\phi)$ may be written as the sum
\[\ker\{v \mapsto v \cdot (x' - x)\}  - \frac{(x' + x)^T A (x' - x)}{|x - x'|} (x' - x).\]
This is because for any $v - \frac{(x' + x)^T A (x' - x)}{|x - x'|} (x - x')$ in the sum above,
\begin{align*}
    \phi(v - \frac{(x' + x)^T A (x' - x)}{|x - x'|} (x' - x)) &= (x' + x)^T A (x' - x)+ [v - \frac{(x' + x)^T A (x' - x)}{|x - x'|} (x - x')] \cdot (x' - x)\\
    &= (x' + x)^T A (x' - x) + 0 - \frac{(x' + x)^T A (x' - x)}{|x - x'|} |x' - x|\\
    &= (x' + x)^T A (x' - x) -(x' + x)^T A (x' - x)\\
    &= 0.
\end{align*}
Now we observe that
\begin{align*}
    |((x' + x)^T A (x' - x)| &= |(x')^T A (x') - (x)^T A (x)|\\
    &\leq |x^T A x| + |(x')^T A (x')|\\
    &\leq ||A||_{op} |x|^2 + ||A||_{op} |x'|^2\\
    &\leq ||A||_{op} (2 R^2)\\
    &< \frac{2R^2}{1 + 2R^2}\\
    &= 1 - \frac{1}{1 + 2R^2}\\
    &< 1 - ||A||_{op},
\end{align*}
where the fifth and last line both follow from the assumption $||A||_{op} < \frac{1}{1 + 2R^2}$. The radius of $P$ is $1 - ||A||_{op}$, so $|(x + x')^T A (x - x')| < 1 - ||A||_{op}$ implies that $\{v \in P\ |\ \varphi(v) \geq 0\}$ is a compact and contractible subset of $P$. Thus, $\chi(K_{x, f} \cap K'_{x', f}) = \chi(\{v \in P\ |\ \varphi(v) \geq 0\}) = 1$.

Thus, by Lemma~\ref{lem::formula_inversion_type}, for all $x' \in \bb R^n$,
\begin{align*}
         R_{K'_f} \circ R_{K_f} h &= [1 + (-1)^{n-1}] h(x') + \int_{\bb R^n} h(x) \mathbbm{1}_{\{(x, x') \notin \Delta \}} d\chi(x)\\
         &= [1 + (-1)^{n-1}] h(x') + \int_{\bb R^n - \{x'\}} h(x) d\chi(x)\\
         &= [1 + (-1)^{n-1}] h(x') + \int_{\bb R^n} h(x) d\chi(x) - \int_{\{x'\}} h(x) d\chi(x)\\
         &= [(-1)^{n-1}] h(x') + \int_{\bb R^n} h(x) d\chi(x).
\end{align*}
Since $h(x) \in \operatorname{CF}(B_R(0))$ has compact support, $\QECT$ would determine the value of $\int_{\bb R^n} h(x) d\chi(x)$. Thus, the function $\QECT(A, -, -): \operatorname{CF}(B_R(0)) \to \operatorname{CF}(\bb S^{n-1} \times \bb R)$ is injective.
\end{proof}

Finally, we will also discuss some auxiliary properties of the $\QECT$.
\begin{prop}\label{prop:: auxiliary QECT}
    Let $f: \bb R^n \to \Zbb$ be a constructible function, then
    \begin{enumerate}
        \item $\QECT(f)$ takes only finitely many values as $(A, \nu, t)$ runs through $S^W \times \Rbb$. 
        \item For a fixed $(A, \nu) \in S^W$, the function $t \mapsto \QECT(f)(A, \nu, t)$ is right continuous.
    \end{enumerate}
\end{prop}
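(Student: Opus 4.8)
The plan is to peel off the coefficients of $f$ and reduce both parts to a statement about sublevel sets of a single definable function on a fixed definable set. Since $f$ is constructible it takes finitely many values, so write $f = \sum_{i=1}^{k} a_i\,\mathbbm{1}_{A_i}$ with $A_i = f^{-1}(a_i)$ definable. For any $(A,\nu,t)$ the sets $A_i \cap \{x\in\bb R^n : x^T A x + \nu\cdot x \leq t\}$ are exactly the nonzero level sets of the integrand $x \mapsto f(x)\,\mathbbm{1}_{\{x^T A x + \nu\cdot x \leq t\}}(x)$, so directly from the definition of the Euler integral
\[
\QECT(f)(A,\nu,t) \;=\; \sum_{i=1}^{k} a_i\,\chi\!\left(A_i \cap \{x\in\bb R^n : x^T A x + \nu\cdot x \leq t\}\right).
\]
Hence it is enough to prove part (1) and part (2) after replacing $f$ by a single indicator $\mathbbm{1}_{A_i}$: a finite $\bb Z$-linear combination of functions with finite image has finite image, and a finite sum of right-continuous $\bb Z$-valued functions is right continuous.

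For part (1) I would feed Lemma~\ref{lemma:: finite euler char types} the definable set
\[
\widehat S_i \;\coloneqq\; \{\,((A,\nu,t),x) \in (V\times\bb R^n\times\bb R)\times\bb R^n \;:\; x\in A_i,\ x^T A x + \nu\cdot x \leq t\,\},
\]
which is definable because the cut-off inequality is polynomial in the entries of $A,\nu,t,x$ and $A_i$ is definable. Its fiber over a parameter $(A,\nu,t)$ is precisely $A_i \cap \{x^T A x + \nu\cdot x \leq t\}$, so the lemma gives that the Euler characteristic of this fiber takes only finitely many values as $(A,\nu,t)$ ranges over $V\times\bb R^n\times\bb R$, and \emph{a fortiori} as it ranges over the subset $S^W\times\bb R$. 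Combining the fixed coefficients $a_i$ then yields part (1).

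For part (2), fix $(A,\nu)\in S^W$, put $h(x) = x^T A x + \nu\cdot x$ (a continuous definable function) and $\beta_i(t) = \chi(A_i \cap \{h\leq t\})$. For $s\geq t_0$ the set $A_i\cap\{h\leq s\}$ is the disjoint union of $A_i\cap\{h\leq t_0\}$ and the slab $A_i\cap\{t_0 < h\leq s\}$, so additivity of the combinatorial Euler characteristic gives $\beta_i(s)-\beta_i(t_0) = \chi(A_i\cap\{t_0<h\leq s\})$. I would then apply the definable triviality (Hardt) theorem (\cite{van1998tame}, Ch.~9) to $h|_{A_i}$: there is a finite partition of $\bb R$ into points and open intervals over each of which $h$ is definably trivial, so there exist $\varepsilon>0$, a definable set $F_i$, and a definable homeomorphism $A_i\cap h^{-1}((t_0,t_0+\varepsilon)) \cong (t_0,t_0+\varepsilon)\times F_i$ commuting with $h$. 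For $t_0 < s < t_0+\varepsilon$ this carries the slab onto $(t_0,s]\times F_i$, and by the product formula for the combinatorial Euler characteristic together with $\chi((t_0,s]) = (-1)^1 + (-1)^0 = 0$ (the cell decomposition of a half-open interval is an open interval plus a point), we get $\chi(A_i\cap\{t_0<h\leq s\}) = 0$. Thus $\beta_i$ is constant on $[t_0,t_0+\varepsilon)$, i.e. right continuous at $t_0$; summing over $i$ proves part (2).

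The step I expect to be the main obstacle is this last right-continuity computation. The naive argument --- deformation retracting the thin slab $A_i\cap\{t_0<h\leq s\}$ onto a fiber $A_i\cap h^{-1}(s)$ --- is not available, since the combinatorial Euler characteristic used throughout is a definable-homeomorphism invariant but not a homotopy invariant on noncompact sets (already $\chi(\bb R)=-1$), and these slabs need not be compact. The argument must therefore genuinely pass through definable triviality plus the cell/product formula, which is exactly what forces the half-open factor $(t_0,s]$ to contribute $0$. A minor technical point is that immediately to the right of any $t_0$ one can find an interval $(t_0,t_0+\varepsilon)$ lying inside a single piece of the triviality partition; this is immediate because definable subsets of $\bb R$ are finite unions of points and open intervals.
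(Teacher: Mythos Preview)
Your argument is correct. For part~(1) you do exactly what the paper does: reduce to indicators and invoke Lemma~\ref{lemma:: finite euler char types} on the obvious parametrized family, just as in the proof of Lemma~\ref{lem::technical_euler_char}(3).

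For part~(2) the paper simply cites an external result (Theorem~3.1 of \cite{ji2023euler}) asserting the right-continuity directly, whereas you supply a self-contained proof via Hardt triviality. Your route is genuinely different in presentation and arguably more informative: it makes explicit why the half-open slab $A_i\cap\{t_0<h\le s\}$ contributes zero Euler characteristic, by trivializing it as $(t_0,s]\times F_i$ and using $\chi((t_0,s])=0$. You are also right to flag that a naive deformation-retract argument would fail here because the combinatorial $\chi$ is not a homotopy invariant on noncompact definable sets; passing through definable triviality and the product formula is exactly the correct fix. The only hypothesis to keep in mind when invoking Hardt from \cite{van1998tame}, Ch.~9, is that $h|_{A_i}$ be continuous definable, which it is since $h$ is polynomial. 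What the paper's citation buys is brevity; what your argument buys is self-containment and a transparent reason for the vanishing.
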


\begin{proof}Proposition~\ref{prop:: auxiliary QECT}(1) is a direct application of Lemma~\ref{lemma:: finite euler char types} whose proof of similar to that of Lemma~\ref{lem::technical_euler_char}(3). Thus, the proof is omitted here. Proposition~\ref{prop:: auxiliary QECT}(2) is a direct application of Theorem 3.1 of \cite{ji2023euler}.
\end{proof}

\section{Appendix}\label{sec: appendix}

\subsection{Inversion Formula of $\ECT$ without Compact Support}

Here we reprove the inversion formula of $\ECT$ in \cite{ghrist2018persistent} without the assumption that $h: \bb R^n \to \bb Z$ is compactly supported. The proof will be similar to that of \cite{ghrist2018persistent}.

\begin{proof}[Proof of Equation~\ref{eq: inversion formula for ECT}]
    Following the setup of Definition~\ref{def: general kernel setup}, we choose $X = \bb R^n$, $P = \bb S^{n-1}$, and $f: X \times P \to \bb R$ to be the function $(x, \nu) \mapsto x \cdot \nu$. By Lemma~\ref{lem::technical_euler_char}, we can compute that
    \begin{align*}
        \chi(K_{x, f} \cap K_{x'}', f) &= \chi(\{\nu \in \bb S^{n-1}\ |\ f(x', \nu) - f(x, \nu) \geq 0\})\\
        &=  \chi(\{\nu \in \bb S^{n-1}\ |\ (x' - x)\cdot \nu \geq 0\})\\
        &= \begin{cases}
            \chi(\bb S^{n-1}), x = x'\\
            \chi(\bb S^{n-1}_+), x \neq x'
        \end{cases}\\
        &= \begin{cases}
            1 + (-1)^{n-1}, x = x'\\
            1, x \neq x'
        \end{cases},
    \end{align*}
     where $\bb S^{n-1}_+$ denotes the closed upper hemisphere of $\bb S^{n-1}_+$. By Lemma~\ref{lem::formula_inversion_type}, for all $x' \in \bb R^n$,
     \begin{align*}
         (R_{K'} \circ R_{K}) h &= [1 + (-1)^{n-1}] h(x') + \int_{\bb R^n} h(x) \mathbbm{1}_{\{(x, x') \notin \Delta \}} d\chi(x)\\
         &= [1 + (-1)^{n-1}] h(x') + \int_{\bb R^n} h(x) d\chi(x) - \int_{\{x'\}} h(x) d\chi(x)\\
         &= [(-1)^{n-1}] h(x') + \int_{\bb R^n} h(x) d\chi(x).
     \end{align*}
     The proof of Equation~\ref{eq: inversion formula for ECT} is completed.
\end{proof}

\bibliographystyle{abbrvnat}
\bibliography{references}

\end{document}